\newtheorem*{theorem}{Theorem}
\newtheorem*{lemma}{Lemma}
\begin{document}

\title{Entanglement Complexity in Many-body Systems from Positivity Scaling Laws}

\author{Anna O. Schouten}
\author{David A. Mazziotti}
\email{damazz@uchicago.edu}
\affiliation{Department of Chemistry and The James Franck Institute, The University of Chicago, Chicago, IL 60637 USA}

\date{Submitted September 2, 2025}

\begin{abstract}
Area laws describe how entanglement entropy scales and thus provide important necessary conditions for efficient quantum many-body simulation, but they do not, by themselves, yield a direct measure of computational complexity. Here we introduce a complementary framework based on $p$-particle positivity conditions from reduced density matrix (RDM) theory. These conditions form a hierarchy of $N$-representability constraints for an RDM to correspond to a valid $N$-particle quantum system, becoming exact when the Hamiltonian can be expressed as a convex combination of positive semidefinite $p$-particle operators. We prove a general complexity bound: if a quantum system is solvable with level-$p$ positivity independent of its size, then its entanglement complexity scales polynomially with order $p$. This theorem connects structural constraints on RDMs with computational tractability and provides a rigorous framework for certifying when many-body methods including RDM methods can efficiently simulate correlated quantum matter and materials.
\end{abstract}

\maketitle

{\em Introduction:} Entanglement entropy scaling and area laws~\cite{Horodecki2009, Amico2008, Eisert2010, Kais.2007} are important concepts for characterizing entanglement in a number of disciplines including quantum information~\cite{Kitaev2006, Levin2006}, black hole physics~\cite{Bombelli1986, Srednicki1993}, and quantum many-body physics~\cite{Hastings2004, Plenio2005, Hastings2007a, Schollwoeck2005}.  While one might expect the entropy to scale with the volume of the subsystem---an extensive property---, in a system that obeys an area law the entanglement entropy scales with the surface area of the subsystem~\cite{Amico2008, Eisert2010}. In many-body physics, area laws have important implications both for qualifying the entanglement complexity of the system, i.e., interpreting the nature of quantum correlations, and assessing the computational tractability of obtaining numerical solutions~\cite{Hastings2007, Schuch2008}. For example, whether a system obeys an area law is an indication of its Hamiltonian's locality, and the extent to which its correlations can be represented in a size-independent framework~\cite{Eisert2010}. Significantly, this implies that a system that obeys an area law is solvable at polynomial cost, which is seen in the context of density matrix renormalization group (DMRG) method~\cite{White1992, Schollwoeck2005}, where systems that obey area laws can be efficiently described using matrix-product states~\cite{Verstraete2006}.

Here we introduce an alternative measure of entanglement complexity based on $p$-positivity from the perspective of reduced density matrix (RDM) theory~\cite{Mazziotti2012, Mazziotti2023}. The $p$-positivity conditions, which place constraints on the RDMs at the level of $p$ particles, form a hierarchy of $N$-representability conditions that are necessary to ensure that a computed RDM accurately represents an $N$-particle density matrix~\cite{Coleman1963, Kummer1967, Coleman2000, Mazziotti2012, Mazziotti2023}.  The level of $p$-positivity required to obtain a solution to a many-body problem provides a measure of the entanglement complexity of the system.  To demonstrate, we prove a general complexity bound: if a quantum system is solvable at a fixed level of $p$ positivity conditions that is independent of its overall size, then its entanglement and solution complexity are bounded by a polynomial of order $p$.  Not only is the $N$-representability of the $p$- and fewer-body RDMs achievable with no more than $p$-body operators, but also by duality, the Hamiltonian is expressible as a convex combination of no more than $p$-particle positive semidefinite operators~\cite{Mazziotti2012, Mazziotti2023, Mazziotti2001}.  This structure is closely analogous to the sum-of-squares framework in Lasserre’s hierarchy for nonnegative polynomials, where convex combinations of positive semidefinite forms enforce positivity~\cite{Lasserre2001}  We illustrate this result for the extended Hubbard model which, with $t = 0$, is exactly solvable at the level of 2-positivity.  The positivity scaling laws provide a rigorous framework for not only quantifying entanglement complexity but also certifying when RDM theory, such as the variational 2-RDM method~\cite{Mazziotti2004, Mazziotti2007, Mazziotti2012, Nakata2001, Zhao2004, Shenvi2010, Verstichel.2011, Knight.2022, DePrinceIII2024}, one-electron RDM methods~\cite{Piris.2021, Schilling.2021}, and the related RDM bootstrapping approaches~\cite{Han2020, Gao2025, Scheer.2024}, can efficiently simulate correlated quantum matter and materials. \\

{\em Theory:} For a Hamiltonian, $\hat{H}$, representing an electronic many-body system, the interactions are in general at most pairwise~\cite{Coleman1963}. Consequently, the energy can be written as a function of the 2-particle reduced density matrix (2-RDM), $^2D$,
\begin{equation}
    E = \mathrm{Tr}(\hat{H}\ ^2D).
    \label{eq:energy}
\end{equation}
For the 2-RDM to correctly represent a valid $N$-particle density matrix, beyond the requirements for a density matrix---Hermiticity, fixed trace, particle-exchange symmetry, and positive semidefiniteness---,the 2-RDM must be subject to additional constraints, known as $N$-representability conditions~\cite{Coleman1963, Kummer1967, Coleman2000, Mazziotti2012, Mazziotti2023}.  A hierarchy of such constraints, known as the $p$-positivity conditions~\cite{Mazziotti2001, Mazziotti2012, Mazziotti2006}, can be defined as constraints on the $p$-RDM ($^{p} D$)
\begin{equation}
    \mathrm{Tr}(\hat{C}_i^{}\hat{C}_i^{\dagger}\ ^{p}D)\geq 0, \forall\ \hat{C}^{}_i
    \label{eq:p-positivity}
\end{equation}
where $\hat{C}_i$ are polynomials of order $p$ in fermionic creation and annihilation operators, $\hat{a}^{\dagger}$ and $\hat{a}$. Eq.~(\ref{eq:p-positivity}) constrain all metric matrices associated with the $p$-body RDM to be positive semidefinite~\cite{Mazziotti2001}, meaning all eigenvalues of the matrix must be nonnegative.  The $p$-positivity conditions can be enforced on a lower $q$-RDM with $q<p$ such as the 2-RDM in one of two ways: ({\em i}) contracting the constrained $p$-RDM to the $q$-RDM~\cite{Mazziotti2001, Mazziotti2006} or ({\em ii}) taking convex combinations of the constraints to generate constraints directly on the $q$-RDM---the $(q,p)$-positivity conditions~\cite{Mazziotti2012, Mazziotti2023}.  This method of constrained optimization as functional of the 2-RDM is known as variational 2-RDM (V2RDM) theory~\cite{Mazziotti2004, Mazziotti2007, Mazziotti2012, Nakata2001, Zhao2004, Shenvi2010, Verstichel.2011, Knight.2022, DePrinceIII2024}, which has been applied to computing the ground-state energies and properties of strongly correlated molecules and materials~\cite{Xie.20222n, Schouten.2023, Torres.2024yzf, Schouten.2025a9e}.  The strategy for obtaining an iterative solution subject to physically or symmetrically derived constraints can also be referred to as bootstrapping~\cite{Ferrara1973, Polyakov1974, Poland2019}. For the problem solved at the level of $p$-positivity, the minimized energy $E^{(p)}$ approaches the exact $N$-body energy $E^{(N)}$ from below, making $E^{(p)}$ a lower bound on the exact energy.

At the solution to Eq.~(\ref{eq:energy}), the following must be satisfied,
\begin{gather}
    \mathrm{Tr}[(\hat{H} - E)\ ^pD] = 0 \label{eq:dual} \\
    \mathrm{Tr}[(^p\hat{C}_{i}^{}{} ^p\hat{C}_{i}^{\dagger})\ ^pD] = 0, \forall\ ^pD.
    \label{eq:dual2}
\end{gather}
Because the operators $^p\hat{C}_{i}^{}{} ^p\hat{C}_{i}^{\dagger}$ in Eq.~(\ref{eq:dual2}), defining the active $N$-representability conditions of $^pD$ for the given Hamiltonian~\cite{Mazziotti2012}, span the null space of $^pD$, the Hamiltonian $(\hat{H} - E)$ in Eq.~(\ref{eq:dual}) can be written as a convex combination of these operators such that,
\begin{equation}
    (\hat{H} - E) = \sum_i \beta_i\ ^p\hat{C}_{i}^{}\ ^p\hat{C}_i^{\dagger}
    \label{eq:Hamiltonian_sum}
\end{equation}
where $\beta_i \geq 0$.  If we maximize the energy $E$ subject to the constraints on the Hamiltonian in Eq.~({\ref{eq:Hamiltonian_sum}), then we have the dual-cone (or polar-cone) formulation, developed and implemented in Refs.~\cite{Mazziotti2023, Mazziotti.2020z0p, Mazziotti.2016sve, Mazziotti2012, Cances.2006}.  While RDMs do not appear explicitly in the dual formulation, we have previously shown that they arise indirectly where the Lagrange multiplier of the $p$-particle constraints on the energy maximization in Eq.~(\ref{eq:Hamiltonian_sum}) is the $p$-RDM~\cite{Mazziotti.2020z0p}.

Ref.~\cite{Mazziotti2012} proves that the $N$-representability conditions on the 2-RDM can be expressed as a hierarchy---$(2,p)$-positivity conditions---in which 2-body operators are constructed from convex combinations of positivity semidefinite $p$-body operators and that when $p=r$ in the hierarchy, where $r$ is the rank of the one-electron basis, the $N$-representability conditions are complete.  From this perspective, expressing $(\hat{H} - E)$ as a convex combination of positive semidefinite operators generates one of the $N$-representability conditions on the 2-RDM.

The dual-cone variational 2-RDM theory enables the exploitation of Hamiltonian structure including sparsity, low rank behavior, and symmetries.  It has been applied to molecular systems with the 2-positivity conditions as well as the partial 3-positivity conditions including the $T_{2}$ conditions~\cite{ Mazziotti.2020z0p, Mazziotti.2016sve}.  Recently, a formulation in the dual cone has been developed that solves a matrix equation to generate all of the constraints on the 2-RDM, providing another approach to the $N$-representability conditions~\cite{Mazziotti2023} that complements the solution by convex combinations in Ref.~\cite{Mazziotti2012}.

While the preceding discussion has focused on the ground-state problem, the $p$-positivity formulation is not inherently limited to the ground state. By modifying the Hamiltonian in the objective---for example, by using a variance Hamiltonian $({\hat H}-E)^2$~\cite{Mazziotti2001}---the same framework can in principle be applied to other quantum states. For clarity of exposition, however, we restrict the statements below to the ground-state problem, noting that the results for entanglement complexity extend more broadly.

\begin{lemma}
    If a problem can be expressed exactly as an energy minimization subject to the $p$-positivity conditions in Eq.~(\ref{eq:p-positivity}) or equivalently, an energy maximization constrained by the convex combination of positive semidefinite $p$-particle operators as in Eq.~(\ref{eq:Hamiltonian_sum}), then the solution can be obtained in polynomial time at fixed $p$.
\end{lemma}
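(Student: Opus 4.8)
\emph{Proof strategy.} The plan is to recognize the level-$p$ problem as a semidefinite program (SDP) whose size grows only polynomially in the system parameters when $p$ is held fixed, and then invoke the standard polynomial-time solvability of SDPs. First I would take the free variables to be the entries of $^pD$ (equivalently, of the full collection of metric matrices built from at most $p$ creation/annihilation operators); every lower RDM, and in particular the $^2D$ entering the objective $E=\mathrm{Tr}(\hat H\,{}^2D)$ in Eq.~(\ref{eq:energy}), is a fixed linear contraction of $^pD$, so the objective is linear in the variables. The number of such variables is $\mathcal{O}(r^{2p})$ with $r$ the one-particle rank, hence polynomial at fixed $p$.

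Next I would catalogue the constraints. The defining $p$-positivity conditions, Eq.~(\ref{eq:p-positivity}), are precisely the statement that each metric matrix associated with $^pD$ is positive semidefinite; each such matrix has dimension $\mathcal{O}(r^{p})$, and there are $\mathcal{O}(1)$ of them at fixed $p$, so these are $\mathcal{O}(1)$ linear-matrix-inequality blocks of polynomial size. The remaining requirements---Hermiticity, fixed trace, fermionic antisymmetry, and the linear contraction-consistency relations among the metric matrices---amount to a polynomial number of linear equalities in the entries. Thus the level-$p$ minimization is a standard primal SDP: linear objective, polynomially many linear equalities, and a polynomial-size positive semidefinite cone. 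Invoking interior-point methods (or the ellipsoid method) for SDP then gives a solution to any fixed accuracy in time polynomial in the problem dimension and $\log(1/\varepsilon)$, i.e. polynomial at fixed $p$. For the dual-cone formulation I would argue symmetrically: maximizing $E$ subject to Eq.~(\ref{eq:Hamiltonian_sum}), i.e. $(\hat H - E)$ lying in the cone generated by the positive semidefinite $p$-particle operators $^p\hat C_i^{}\,{}^p\hat C_i^{\dagger}$, is again a linear matrix inequality over matrices of dimension $\mathcal{O}(r^{p})$, hence a polynomial-size SDP; it is the Lagrangian dual of the primal above with $^pD$ appearing as the multiplier of the energy constraint, so strong duality identifies the two optima.

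The main obstacle I anticipate is not the counting but the regularity needed for the polynomial-time SDP guarantees to be \emph{uniform} in system size: one must check boundedness of the feasible set---immediate from the fixed trace together with positive semidefiniteness---and strict feasibility (Slater's condition), which may require a small interior-point perturbation argument, together with control of the bit-complexity of the data so that the $\log(1/\varepsilon)$ dependence does not conceal a size-dependent blow-up. A secondary point to handle carefully is that the hypothesis ``solvable exactly at level $p$'' must be used to conclude that the SDP optimum coincides with the true many-body energy $E^{(N)}$ and that the returned multiplier $^pD$ is $N$-representable, so that one recovers the physical ground state; without that hypothesis the same argument still yields the variational lower bound $E^{(p)}\le E^{(N)}$ in polynomial time, but not the exact solution.
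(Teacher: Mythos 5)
Your proposal follows essentially the same route as the paper: recognize the level-$p$ optimization (in either the primal $^pD$ form or the dual-cone form) as a semidefinite program of polynomial size at fixed $p$, and invoke the polynomial-time solvability of SDPs. The paper's proof is just a two-sentence version of this citing the SDP literature; your explicit size-counting and the caveats about Slater's condition and bit-complexity are refinements of, not departures from, that argument.
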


\begin{proof}
    The optimization of the energy with the $p$-positivity constraints in Eq.~(\ref{eq:p-positivity}) or Eq.~(\ref{eq:Hamiltonian_sum}) may be solved using a semidefinite program~\cite{Mazziotti2004, Mazziotti2011, Fukuda2007}. Since semidefinite programs can be solved in polynomial time~\cite{Vandenberghe.1996}, the solution of a quantum problem exactly expressible at fixed $p$ is polynomial in cost.
\end{proof}

\noindent The lemma shows that whenever a problem is expressible at a fixed level of $p$-positivity, its solution can be obtained with polynomial cost; the crucial issue, however, is whether that level can remain fixed as the system grows.

\begin{theorem}
If a problem is solvable at the level of $p$-positivity in a manner that is independent of system size---i.e., a single $p$ suffices uniformly in $N$ and, when defined, in the thermodynamic limit---, then both the entanglement complexity and the solution complexity are $O(p)$.
\end{theorem}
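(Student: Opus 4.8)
The plan is to derive both bounds from the dual decomposition in Eq.~(\ref{eq:Hamiltonian_sum}) together with the size-independence hypothesis, treating the one-particle basis rank $r$ as growing at most polynomially with $N$ at fixed local dimension per orbital. For the \emph{solution complexity}, I would sharpen the Lemma by accounting for the size of the semidefinite program: at level $p$ the metric (block) matrices associated with $^pD$ are indexed by sets of at most $p$ orbitals, hence have dimension $O(r^p)$, with $O(r^{2p})$ variables and constraints, so an interior-point solver~\cite{Vandenberghe.1996} runs in time polynomial in these sizes, i.e.\ $r^{O(p)}$. Because $p$ is fixed uniformly in $N$ by hypothesis, this is $N^{O(p)}$, a polynomial whose degree is $O(p)$; this is the precise sense in which the solution complexity is $O(p)$.

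For the \emph{entanglement complexity}, I would start from the fact that Eq.~(\ref{eq:Hamiltonian_sum}) writes $\hat H - E$ as a sum of positive semidefinite operators $^p\hat C_i{}^p\hat C_i^{\dagger}$, each supported on $O(p)$ orbitals. Pairing against the exact ground state and using $(\hat H-E)\,|\Psi\rangle = 0$ forces $^p\hat C_i^{\dagger}\,|\Psi\rangle = 0$ for every $i$ with $\beta_i>0$, so $|\Psi\rangle$ is the common kernel vector of a family of $O(p)$-local operators — a frustration-free ground state of a parent Hamiltonian of interaction support $O(p)$, reconstructible from its $p$-RDM through the active conditions Eqs.~(\ref{eq:dual})--(\ref{eq:dual2}). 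I would then bound the Schmidt rank of $|\Psi\rangle$ across an arbitrary bipartition by the dimension of the operator space defining the kernel conditions that straddle the cut, which is at most $r^{O(p)}$; equivalently $|\Psi\rangle$ admits a matrix-product/tensor-network representation of bond dimension $r^{O(p)}$, so the entanglement entropy satisfies $S \le O(p\log r)$ — order $p$ at fixed local dimension, with the bound not drifting with $N$ or in the thermodynamic limit by the size-independence hypothesis. This is the RDM-theoretic counterpart of the area-law/DMRG efficiency recalled in the introduction.

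I expect the main obstacle to be making the entanglement estimate uniform and rigorous. Two points need care. First, a convex combination of $p$-particle positive operators need not inherit geometric locality from $\hat H$, so the estimate must be stated in terms of orbital support (at most $O(p)$ orbitals per term) rather than lattice distance, and one must check that any bipartition meets only an $r^{O(p)}$-dimensional slice of the kernel conditions. Second, frustration-freeness alone does not force an area law without further input; here the hypothesis ``solvable at level $p$ uniformly in $N$'' is doing real work, since it is exactly the statement that the $p$-body kernel conditions keep pinning down the state as the system grows. The plan is therefore to either invoke the exact frustration-free-to-MPS correspondence in one dimension (or a Hastings-type area-law theorem in the gapped case), or, more in the spirit of this framework, to use the RDM characterization directly: the solution is an extreme point of the level-$p$ feasible set determined by its $p$-RDM, whose rank — hence the rank of every reduced density matrix it generates — is controlled by $r^{O(p)}$. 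Establishing that this extreme-point rank bound holds in full generality, independent of $N$, is the crux of the argument.
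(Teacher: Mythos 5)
Your treatment of the solution complexity is the same route as the paper's, only more quantitative: the paper simply invokes the Lemma (fixed-$p$ positivity $\Rightarrow$ a semidefinite program solvable in polynomial time), whereas you make explicit that the program has blocks of dimension $O(r^p)$ and hence cost $N^{O(p)}$, which is exactly the sense in which the paper's ``$O(p)$'' should be read. That half is fine and arguably an improvement.

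On the entanglement half you take a genuinely different, and much more ambitious, route than the paper. The paper's argument is structural, almost definitional: because the problem is representable by the $p$-RDM in the primal and by convex combinations of $p$-body positive semidefinite operators in the dual, the correlations are supported on at most $p$ particles and holes, and ``entanglement complexity $O(p)$'' is asserted in that RDM-intrinsic sense --- no bipartite entropy or Schmidt rank is ever bounded. You instead try to convert Eq.~(\ref{eq:Hamiltonian_sum}) into a bipartite statement: your deduction that $(\hat{H}-E)|\Psi\rangle=0$ together with $\beta_i\geq 0$ forces $\,^p\hat{C}_i^{\dagger}|\Psi\rangle=0$ for all active $i$ is correct and is a nice frustration-freeness observation absent from the paper, but the step from there to a Schmidt-rank bound $r^{O(p)}$ across an arbitrary cut is a genuine gap, and you identify it as such yourself. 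The $\hat{C}_i$ are $p$-body in particle number, not $p$-local in space, so no area-law or MPS machinery applies without additional hypotheses; moreover, since $r$ grows with $N$, even your claimed bound $S\leq O(p\log r)$ drifts logarithmically with system size (consistent with, e.g., the AGP states the paper cites as 2-positivity-exact, whose subsystem entropy grows like $\log N$), so the assertion that the bound ``does not drift with $N$'' is not right. In short: you prove a stronger statement than the paper attempts, you prove it only partially, and the part you flag as ``the crux'' is precisely what the paper avoids needing by defining entanglement complexity through the particle-hole order of the supporting operators rather than through bipartite entropy.
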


\begin{proof}
If the level $p$ needed to represent the Hamiltonian does not increase with system size, then there exists a solution at level-$p$ positivity with $p<N$. By the lemma, such a problem can be solved with polynomial complexity, so the solution complexity is $O(p)$. Furthermore, because the quantum problem is representable with only the $p$-RDM in the primal formulation and convex combinations of $p$-body operators in the dual, the correlations---and thus the entanglement---are supported only up to $p$ particles, $p$ holes, or any combination of particles and holes up to $p$. Hence, the entanglement complexity is also $O(p)$.
\end{proof}

The above theorem establishes a significant class of quantum problems with polynomial-scaling entanglement complexity that are solvable in the context of reduced density matrix theory by semidefinite programming in polynomial time.  Examples where 2-positivity is sufficient include the antisymmetrized geminal power (AGP) model of superconductivity~\cite{Coleman2000} and the harmonic-interaction model for bosons~\cite{Gidofalvi.2004p89}.  When the complexity of the Hamiltonian approaches $N$, however, the level of positivity required to solve the problem also increases with $N$, and hence, the cost to solve such problems exhibits exponential scaling.  Such entanglement complexity can occur in highly correlated quantum scenarios such as in the vicinity of the critical points of some quantum phase transitions. \\

\begin{figure}[tbh!]
    \centering
    \includegraphics[width=8cm]{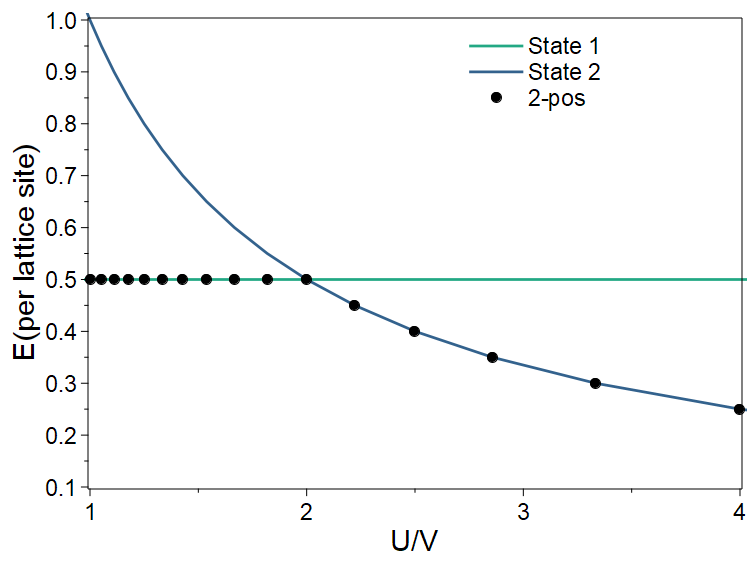}
    \caption{Energy of the extended Hubbard model with $t = 0$ and $U = 1$. In the region where $U/V < 2$, the ground state energy is equal to $U/2$. A phase transition occurs at $U/V = 2$, after which the ground state energy is equal to $V$. The two energy curves corresponding the $U$ (State 1) and $V$ (State 2) driven states are shown as solid lines. The ground state energy calculated with V2RDM with 2-positivity conditions (2-pos) is exact for this model and is shown as black points along the curves.}
    \label{fig:energy_t0}
\end{figure}

{\em Results:} We illustrate the theorem by solving the extended Hubbard model~\cite{Zhang1989, Qin2022, Arovas2022}, given by,
\begin{equation}
    \begin{split}
    \hat{H} &= -t\sum_{<i,j>} \hat{a}^{\dagger}_{i}\hat{a}^{}_j + h.c\\
    &+ U\sum_i \hat{a}^{\dagger}_{i\uparrow}\hat{a}^{}_{i\uparrow}\hat{a}^{\dagger}_{i\downarrow}\hat{a}^{}_{i\downarrow} + V \sum_{\substack{<i,j>\\ \sigma,\sigma'}} \hat{a}^{\dagger}_{i\sigma} \hat{a}^{}_{i\sigma}\hat{a}^{\dagger}_{j\sigma'}\hat{a}^{}_{j\sigma'}
\end{split}
\label{eq:extended_Hubbard}
\end{equation}
where $\hat{a} (\hat{a}^{\dagger})$ are fermionic creation and annihilation operators, $<i,j>$ indicates $i$ and $j$ are nearest-neighbors, $t$ represents hopping, $U$ represents on-site repulsion, and $V$ represents nearest-neighbor repulsion. We solve the model at half-filling with periodic boundary conditions, computing the ground-state energy from a variational minimization of the 2-RDM subject to 2-positivity or 2- and partial 3-positivity conditions~\cite{Mazziotti2004, Mazziotti2007, Mazziotti2012, Nakata2001, Zhao2004, Shenvi2010, Verstichel.2011, Knight.2022, DePrinceIII2024}.  The semidefinite program in the variational 2-RDM (V2RDM) calculation is solved by a first-order boundary-point algorithm~\cite{Mazziotti2011}.

In the model when $t = 0$ (i.e., no hopping), the Hamiltonian is diagonal and consists exclusively of two-body interactions. In this regime, the Hamiltonian has exact solutions with a phase transition from a charge-density wave (CDW) to a spin-density wave (SDW), accompanied by a discontinuous transition in the ground-state energy as the system moves between $U$-dominated and $V$-dominated regions~\cite{Hirsch1984, Dongen1992, Jeckelmann2002}. The ground-state energy in each region is given as a function of $U$ and $V$,
\begin{equation}
   E =  \begin{cases}
        UL/2 & U < 2V\\
        VL & U > 2V
    \end{cases}
    \label{eq:phase_transition}
\end{equation}
for $L$ lattice sites. In this regime, the Hamiltonian is exactly solvable at the level of 2-positivity using semidefinite programming. The 2-positivity conditions constrain the two-particle ($^2D$), two-hole ($^2Q$), particle-hole ($^2G$) RDMs~\cite{Mazziotti2007, Garrod.1964},
\begin{gather}
    ^2D^{i,j}_{k,l} = \langle\Psi|\hat{a}^{\dagger}_i\hat{a}^{\dagger}_j\hat{a}^{}_l\hat{a}^{}_k|\Psi\rangle\\
    ^2Q^{i,j}_{k,l} = \langle\Psi|\hat{a}^{}_i\hat{a}^{}_j\hat{a}^{\dagger}_l \hat{a}^{\dagger}_k|\Psi\rangle\\
    ^2G^{i,j}_{k,l} = \langle\Psi|\hat{a}^{\dagger}_i\hat{a}^{}_j\hat{a}^{\dagger}_l\hat{a}^{}_k|\Psi\rangle,
\end{gather}
to be positive semidefinite.

Figure~\ref{fig:energy_t0} shows the exact energy at the two competing limits as a function of $U/V$ when $t = 0$ and $U = 1$ for a range of $V$. Because $U$ is fixed, the energy of the state dominated by $U$ is constant for all values of $U/V$. The state dominated by $V$ begins higher in energy when $U/V < 2$ and the two states cross at $U/V$ such that the $V$ dominated state becomes the ground state when $U/V > 2$. The ground-state energy calculated at the level of 2-positivity (2-pos) captures the transition of the ground-state from $E/L = U/2$ to $E/L = V$. Even at the phase transition, the complexity is of order 2 in this case, and the system can be solved exactly at the level of 2-positivity.

\begin{figure}[tbh!]
    \centering
    \includegraphics[width=8cm]{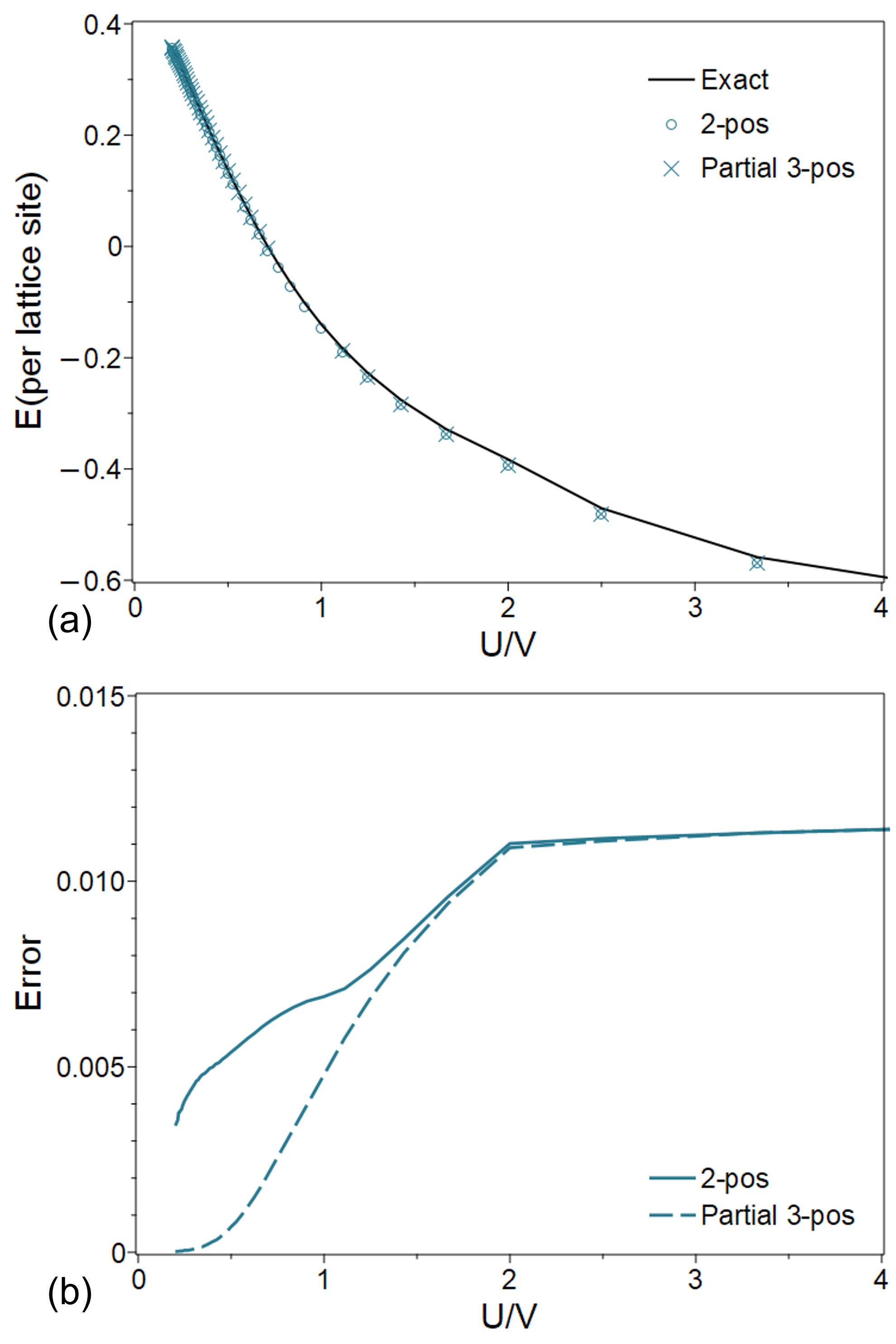}
    \caption{(a) The exact ground-state energy of the extended Hubbard model with $t = U = 1$ and the ground-state energy calculated with V2RDM with 2-positivity (2-pos) and partial 3-positivity (partial 3-pos) conditions. (b) The error between the exact ground-state energy and the V2RDM energy.}
    \label{fig:energy_error_t1}
\end{figure}

When $t > 0$, while there is still predicted to be phase transition near $U/V \approx 2$, the transition becomes continuous and 2-positivity conditions are not exact in either the $U > 2V$ or $U < 2V$ limits. Figure~\ref{fig:energy_error_t1}a shows the energy calculated using exact diagonalization and the semidefinite program with 2-positivity and partial 3-positivity conditions. A partial 3-positivity condition, known as the $T_2$ condition~\cite{Zhao2004, Mazziotti2006, Mazziotti.2016sve, Erdahl.1978}, imposes the following constraint in addition to 2-positivity,
\begin{gather}
    T_2 =\ ^3E\ +\ ^3F \succeq 0.
\end{gather}
where $^3E$ and $^3F$ are 3-body RDMs, representing the probabilities for two particles and a hole as well as two holes and a particle, respectively,
\begin{gather}
    ^3E^{i,j,k}_{l,m,n} = \langle\Psi|\hat{a}^{\dagger}_i\hat{a}^{\dagger}_j\hat{a}^{}_k\hat{a}^{\dagger}_n\hat{a}^{}_m\hat{a}^{}_l|\Psi\rangle\\
    ^3F^{i,j,k}_{l,m,n} = \langle\Psi|\hat{a}^{\dagger}_i\hat{a}^{}_j\hat{a}^{}_k\hat{a}^{\dagger}_n\hat{a}^{\dagger}_m\hat{a}^{}_l|\Psi\rangle.
\end{gather}
The $T_2$ matrix can be explicitly expressed in terms of the elements of the 2-RDM~\cite{Zhao2004, Mazziotti2006}. This condition is known as a partial 3-positivity condition and imposing this constraint adds a higher level of positivity than 2-positivity without complete 3-positivity.

The energies calculated with 2-positivity and partial 3-positivity conditions closely follow the shape of the curve for the exact energy. Unlike when $t = 0$, there are no discontinuities in the energy curve indicating the phase transition. However, examining the error of the variational results relative to the exact results, we observe a sharp change in the trend of the error around $U/V = 2$. When $U/V > 2$, the error curves calculated with 2-positivity conditions and partial 3-positivity conditions are almost the same and the curves are flat as $U/V$ increases. When $U/V < 2$, the error curves drop abruptly and the curve with 2-positivity conditions separates from the curve with partial 3-positivity conditions as the error with partial 3-positivity conditions approaches zero for small values of $U/V$. This behavior demonstrates a change in the complexity of the Hamiltonian around $U/V = 2$ corresponding to the phase transition. Above $U/V  = 2$ the complexity is relatively constant but the order of complexity exceeds 2-positivity and partial 3-positivity. Below $U/V = 2$, the complexity of the Hamiltonian decreases with $U/V$ and at small values of $U/V$ approaches a complexity order near the level of partial 3-positivity. \\

{\em Conclusions:} We present a theorem that connects entanglement complexity to $p$-positivity. Specifically, if a system is solvable at level-$p$ positivity in a manner independent of its size, then both the entanglement and solution complexities are $O(p)$. This result establishes a reduced-density-matrix perspective: a quantum problem solvable at level $p$ is representable through the $p$-RDM and its dual cone, which capture correlations involving at most $p$ particles or holes. As discussed in the accompanying lemma, such problems can be solved with semidefinite programming in polynomial time.  The extended Hubbard model illustrates this perspective, being exactly solvable at 2-positivity when $t=0$ and well-approximated by finite $p$ levels when $t>0$.

For certain types of systems near criticality the complexity increases exponentially, as indicated by violation of area laws~\cite{Amico2008}. While we do not make direct comparisons here, we expect that in these cases where area laws are violated due to increasing complexity, exponential growth in the required level of $p$-positivity would also occur. For example, in the Ising model at the phase transition, the complexity increases such that a level of positivity less than $N$ is insufficient to capture the rapid change in the 2-RDM near the critical point~\cite{Schwerdtfeger2009}. Nevertheless, in many cases, even if the complexity exceeds $p$, a finite level of $p$-positivity can offer reasonable solutions. This behavior occurs when the complexity is approximately reducible to finite $p$ even if the reduction is not exact, as seen in molecular calculations~\cite{Xie.20222n, Schouten.2023, Torres.2024yzf, Schouten.2025a9e} from amorphous coordination polymers to exciton condensates,  ultracold few-fermion systems~\cite{Knight.2022}, fractional quantum Hall states~\cite{Gao2025},  as well as spin systems like the extended Hubbard model with $t > 0$ discussed here.

The $p$-positivity framework thus complements the concept of area laws: both provide measures of entanglement complexity but through different structural lenses. In addition to offering a rigorous theory for solution complexity, $p$-positivity establishes a fundamental conceptualization of entanglement complexity within the context of reduced density matrices with potentially significant implications for our understanding of many-body quantum systems.

\begin{acknowledgments}
D.A.M gratefully acknowledges the U.S. National Science Foundation Grant No. CHE-2155082 for support.
\end{acknowledgments}

\bibliography{2-pos}

\end{document}